\documentclass{article}

\title{ Maximum Linear Arrangement is NP complete on Bipartite Graphs} 
\author{Aparna Das \and Jonathan Needleman}

\usepackage{amsmath}
\usepackage{amsthm}
\usepackage{amsfonts}
\usepackage{amssymb}
\usepackage{amscd}
\usepackage{calc}
\usepackage{graphicx}
\usepackage[arc, all]{xy}
\usepackage{color}
\usepackage{wasysym}
\usepackage{setspace}
\usepackage{url}
\usepackage{hyperref} 
\usepackage{comment}

\newtheorem{theorem}{Theorem}
\newtheorem{prop}[theorem]{Proposition}
\newtheorem{lemma}[theorem]{Lemma}
\newcommand{\maxLA}
{\texttt{MaxLA}}
\newcommand{\minLA}{\texttt{MinLA}}

\newcommand{\ds}{\displaystyle}
\xyoption{line}

\begin{document}

\maketitle

\begin{abstract}
The linear arrangement of a simple graph $G = (V,E)$ is a bijective labeling $\pi$ of the vertices of $G$ to the set $\{1, \ldots, n\}$.  The maximum linear arrangement ($\maxLA$) problem is to determine whether $G$ has a labeling such that the sum of labels across $G$'s edges, $\sum_{uv \in E} |\pi(u) - \pi(v)|$, is at least a given value $k$.  Minimum linear arrangement ($\minLA$) is defined analogously.

Both $\maxLA$ and $\minLA$ are known to be NP complete for general graphs.  $\minLA$ was shown to be NP complete for bipartite graphs by Evans and Shiloah in 1975~\cite{even}.  It has been an open problem whether the same holds for $\maxLA$.  In this paper we prove that $\maxLA$ is NP complete for bipartite graphs.

\end{abstract}

\section{Introduction}
A linear arrangement of a simple graph $G = (V,E)$ is a bijective labeling $\pi$ of the vertices of $G$ to the set $\{1, \ldots, n\}$.  The linear arrangement sum is defined by extending $\pi$ to a function on the edges, where for each edge $e=xy\in E$, $\pi(e)=|\pi(x)-\pi(y)|$. The sum is then
\[D_\pi=\sum_{e\in E} \pi(e)\]
and the maximum sum of $G$ is defined as
$D(G)=\max_{\pi} D_\pi$.

The maximum linear arrangement problem ($\maxLA$) takes a graph $G$ and a value $k$ and asks whether $D(G)\ge k$.  The minimum linear arrangement problem ($\minLA$) is defined analogously.

Both $\maxLA$ and $\minLA$ were shown to be NP complete on general graphs  \cite{garey}. 
Since the 1970s, $\minLA$ has been studied more extensively than $\maxLA$. For example, $\minLA$ is also known to be NP Complete for bipartite graphs~\cite{even}, interval graphs, and permutation graphs~\cite{cohen}. See \cite{diaz} and \cite{petit} for a survey. 

Recently, however there have been a number of new results for $\maxLA$. Devos and Nurse show that the problem is NP-Hard on directed planar graphs~\cite{devosNurse}, suggesting $\maxLA$ may be NP-Hard also for bipartite graphs. In constrast, Hassin and Rubinstein showed a $2$-approximation exists for the generalized weighted version of $\maxLA$~\cite{hassinRubinstein}. Azar and Vainstein improved these results by showing that in metric space the problem admits an approximation scheme ($1-\epsilon$- approximation for any constant $\epsilon > 0$)~\cite{azarVainstein}.  
Alemany-Puig et al. showed that  $\maxLA$ can be solved for projective and planar variants of trees in linear time~\cite{Puig_Math}. The same authors studied a variation of $\maxLA$ for bipartite graphs $G=(V_1\cup V_2, E)$, called \texttt{bipartite} $\maxLA$ where the labels of $V_1$ are constrained to be smaller than the labels of $V_2$. They give a linear time algorithm to solve \texttt{bipartite} $\maxLA$~\cite{Puig_Math_new}. These positive results give hope that $\maxLA$ could be solvable on bipartite graphs.
 
Our main result resolves this tension by showing that $\maxLA$ is NP complete on bipartite graphs.  
Since maximizing the linear arrangement sum of any graph $G$ minimizes the sum of its complement graph $\bar{G}$~\cite{devosNurse}, our result immediately implies that $\minLA$ is NP complete on the complement of bipartite graphs. 

\section{Notation and Known Results}
Let $G=(V,E)$ be a graph with $|V|=n$ and let $\pi$ be an arrangement of its vertices. It is helpful to think of a linear arrangement $\pi$ as a mapping of vertices $V$ onto the integers on the number line from $1$ to $n$.  Then when $\pi(x)<\pi(y)$ we can say that vertex $x$ is to the left of vertex $y$ and $y$ is to the right of $x$. 
We denote the neighbors of a vertex $u\in V$ as $N(u)=\{v\in V|uv\in E\}$ and define the left (resp. right) directional degree of a vertex $u$, as its number of neighbors to the left (resp. right) in the arrangement $\pi$ as 
 \[\mathfrak{l}_\pi(u)=|\{v\in N(u)|\pi(v)<\pi(u)\} \hskip 10mm \mathfrak{r}_\pi(u)=|\{v\in N(u)|\pi(u)<\pi(v)\}\]
For a given arrangement, the {\em cut width} of a position $1\leq i\leq n$  is 
\[c_\pi(i)=|\{uv\in E|\pi(u)\leq i, \pi(v)>i\}|.\]
It will be useful to talk about the cut width of a vertex $v\in V$.  To do this, we abuse notation and say $c_\pi(v)=c_\pi(\pi(v))$.

Devos and Nurse introduced a robust framework for understanding arrangements \cite{nurse, devosNurse}.
Although the framework was defined for directed graphs, as noted by Puig et al. \cite{Puig_Math_new}, it also applies to undirected graphs.
We present the parts that will be most helpful in this paper.  First, they note that cut widths sum up to $D_\pi$, namely
\begin{equation} D_\pi=\sum_{i=1}^n c_\pi(i).\label{eq:sumcutwidth}\end{equation}
They define the {\em level} for a vertex $v\in V$ as the difference between its right and left degrees ie., $l_\pi(v)=\mathfrak{r}_\pi(v)   - \mathfrak{l}_\pi(v)$.  Level values are connected to cut widths as follows: For an arrangement $\pi$ which orders vertices as $(v_1, v_2, \ldots, v_n)$,
\begin{equation} c_\pi(i)=\sum_{j=1}^i l_\pi(v_i). \label{eq:sumlevels}\end{equation}
Thus $D_\pi$ can be calculated by using the cut widths or levels produced by $\pi$.  

Devos and Nurse also show that in a {\em maximal} linear arrangement level values are monotonically increasing from left to right and vertices of equal levels can exchange positions in the arrangement without decreasing the sum.   
\begin{lemma}\label{lem:level} \cite{nurse, devosNurse}
If $\pi^*$ is a maximal linear arrangement, then the level values of vertices are monotonically non-increasing with respect to the arrangement.  Thus if $\pi^*(u)<\pi^*(v)$ then $l_{\pi^*}(u)\geq l_{\pi^*}(v)$.
Additionally, vertices with equal levels can exchange positions without decreasing $D_{\pi^*}$.
\end{lemma}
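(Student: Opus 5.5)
The plan is an adjacent-swap exchange argument. We compute how $D_\pi$ changes when two neighbouring vertices in the arrangement trade places, and compare the result against the maximality of $\pi^*$.

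\textbf{Setup.} Write $\pi^*$ as $(v_1,\ldots,v_n)$ and fix adjacent vertices $u=v_i$ and $w=v_{i+1}$. Let $\pi'$ be the arrangement obtained by exchanging them. Only the cut width at position $i$ can change, because for every other position the set of vertices to its left is the same under $\pi^*$ and $\pi'$. Hence, by \eqref{eq:sumcutwidth}, $D_{\pi'}-D_{\pi^*}=c_{\pi'}(i)-c_{\pi^*}(i)$.

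\textbf{The swap formula.} By \eqref{eq:sumlevels}, $c_{\pi^*}(i)=c_{\pi^*}(i-1)+l_{\pi^*}(u)$ and $c_{\pi'}(i)=c_{\pi^*}(i-1)+l_{\pi'}(w)$. We need $l_{\pi'}(w)$ in terms of $\pi^*$.
\begin{itemize}
\item If $uw\notin E$, moving $w$ past $u$ changes none of $w$'s directional degrees, so $l_{\pi'}(w)=l_{\pi^*}(w)$.
\item If $uw\in E$, the neighbour $u$ switches from the left of $w$ to its right, so $l_{\pi'}(w)=l_{\pi^*}(w)+2$.
\end{itemize}
Therefore
\[ D_{\pi'}-D_{\pi^*}=l_{\pi^*}(w)-l_{\pi^*}(u)+2\cdot[uw\in E]. \]

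\textbf{First claim (monotonicity).} By maximality, $D_{\pi'}\le D_{\pi^*}$, so $l_{\pi^*}(u)\ge l_{\pi^*}(w)+2[uw\in E]\ge l_{\pi^*}(w)$. Thus levels are non-increasing across every adjacent pair, and chaining these inequalities gives $l_{\pi^*}(u)\ge l_{\pi^*}(v)$ whenever $\pi^*(u)<\pi^*(v)$.

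\textbf{Second claim (equal levels).} Suppose $l_{\pi^*}(u)=l_{\pi^*}(v)$ with $u$ to the left of $v$. By monotonicity, every vertex between them has this same level. The displayed formula then shows that any two adjacent vertices of equal level are non-adjacent in $G$: if they were adjacent we would have $l(u)\ge l(w)+2$, which is impossible when the levels are equal. So swapping such a pair changes $D$ by exactly $0$.

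Adjacent swaps of non-adjacent vertices keep the levels of all other vertices unchanged, and keep the swapped pair's levels too. After each such swap the arrangement is still maximal with the same block of equal levels, so swaps can be repeated. Any permutation of the consecutive block of equal-level vertices, in particular exchanging $u$ and $v$, is a product of adjacent transpositions, so $D_{\pi^*}$ is unchanged throughout.

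\textbf{Main obstacle.} The hardest step is getting the $+2$ term right when $uw\in E$, since that term is what rules out edges inside an equal-level block. The remaining arguments only need bookkeeping: checking that cut widths at other positions are untouched, and that levels stay invariant under swaps of non-adjacent vertices.
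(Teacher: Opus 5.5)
Your proof is correct and complete. The paper gives no argument of its own for this lemma. It imports the result from DeVos and Nurse, whose framework is set up for directed graphs and is only asserted, following Alemany-Puig et al., to carry over to undirected graphs.

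Your adjacent-transposition identity $D_{\pi'}-D_{\pi^*}=l_{\pi^*}(w)-l_{\pi^*}(u)+2\cdot[uw\in E]$ proves the undirected statement directly. It also gives somewhat more than is stated. In a maximal arrangement, consecutive vertices that are adjacent in $G$ have levels differing by at least $2$. A swap of consecutive equal-level vertices leaves $D$ and every level unchanged. Consequently, each block of equal level is an independent set, and any permutation of the block preserves both $D_{\pi^*}$ and all level values.

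The level-invariance is worth stating explicitly, because it is what lets such exchanges be iterated while the arrangement stays maximal with the same levels. That is how the lemma is used in Lemma~\ref{lemma:u arrangement}: all of $U_L$ and $U_R$ are reordered at once, and the result is then used in later level arguments. The statement as written does not record this invariance, and your argument supplies it.
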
   

\section{The Main Result}
To show NP completeness we reduce the minimum bisection problem to an instance of $\maxLA$ on a bipartite graph. The minimum bisection problem which is known to be NP complete \cite{garey}  takes as input a simple graph $G=(V, E)$ with an even number of vertices. The goal is to partition the vertices of $G$ into two sets of equal size, so that the number of edges with exactly one endpoint in each set is minimized.

For our reduction, we construct a new bipartite graph $G'=(V', E')$ that contains all original $2n$ vertices of $G$ along with $O(n^5)$ new vertices. We add edges to $G'$ to ensure that $D(G')$ is maximized only if the original vertices are bisected into two separate regions of equal sizes, and if the number of edges that go between the two regions is minimized.

The construction of $G'$ is described in section \ref{sec:graph}. The subsequent sections proves properties about the maximum linear arrangement of $G'$ and Theorem \ref{thm:np} uses these properties to show that the $\maxLA$ arrangement of $G'$ implies a minimum bisection for $G$. 

\subsection{Bipartite Graph Construction for MaxLP} \label{sec:graph}
Define a bipartite graph $G'=(V', E')$ from $G=(V,E)$ as follows. Let $V_E= \{v_{ij}|1\leq i< j\leq 2n, v_iv_j\in E  \}$ and $ U_i = \{u_{ij}| 1\leq j\leq n^4\}$. Define $V'$ as
\[V'= V\cup V_E \cup \left(\bigcup_{i=1}^{2n} U_i\right)\]
For convenience we will denote $U=\bigcup_{i=1}^{2n} U_i$. There will be two types of edges in $E'$ : first $v_iv_{ij}\in E'$ if and only if $v_iv_j\in E$. Second $v_iu_{jk}\in E'$ if and only if $i\neq j$, thus $v_i$ is connected to all the $n^4$ vertices in each  $U_j$ where $j\ne i$.  An illustration of $G'$ is shown in Figure \ref{fig:G'}. 
\begin{figure}[htbp!]
\[\xy
(0,10)*\ellipse(20,6){-};(0,28)*{V};
(-15,18)*{\bullet};(-15,21)*{v_1};
(-6,18)*{\bullet};(-6,21)*{v_i};
(6,18)*{\bullet};(6,21)*{v_j};
(15,18)*{\bullet};(15,21)*{v_{2n}};
(0,18)*{\cdots};(10.5,18)*{\cdots};(-10.5,18)*{\cdots};
(-20,0)*\ellipse(5,4){-};(-40,-7)*{V_E};
(-12,0)*\ellipse(6,4){-};(-24,-7)*{U_1};
(-2,0)*\ellipse(6,4){-};(-4,-7)*{U_i};
(8,0)*\ellipse(6,4){-};(16,-7)*{U_j};
(18,0)*\ellipse(6,4){-};(36,-7)*{U_{2n}};
(-14,0)*{\cdots};(6,0)*{\cdots};(26,0)*{\cdots};
(-15,18)*{};(-8,0)*{}**\dir{-};(-15,18)*{};(-6,0)*{}**\dir{-};(-15,18)*{};(-4,0)*{}**\dir{-};(-15,18)*{};(-2,0)*{}**\dir{-}; (-15,18)*{};(0,0)*{}**\dir{-};
(-15,18)*{};(12,0)*{}**\dir{-};(-15,18)*{};(14,0)*{}**\dir{-};(-15,18)*{};(16,0)*{}**\dir{-};(-15,18)*{};(18,0)*{}**\dir{-}; (-15,18)*{};(0,0)*{}**\dir{-};
(-15,18)*{};(32,0)*{}**\dir{-};(-15,18)*{};(34,0)*{}**\dir{-};(-15,18)*{};(36,0)*{}**\dir{-};(-15,18)*{};(38,0)*{}**\dir{-}; (-15,18)*{};(40,0)*{}**\dir{-};
(-6,18)*{};(-28,0)*{}**\dir{-};(-6,18)*{};(-26,0)*{}**\dir{-};(-6,18)*{};(-24,0)*{}**\dir{-};(-6,18)*{};(-22,0)*{}**\dir{-}; (-6,18)*{};(-20,0)*{}**\dir{-};
(-6,18)*{};(12,0)*{}**\dir{-};(-6,18)*{};(14,0)*{}**\dir{-};(-6,18)*{};(16,0)*{}**\dir{-};(-6,18)*{};(18,0)*{}**\dir{-}; (-6,18)*{};(20,0)*{}**\dir{-};
(-6,18)*{};(32,0)*{}**\dir{-};(-6,18)*{};(34,0)*{}**\dir{-};(-6,18)*{};(36,0)*{}**\dir{-};(-6,18)*{};(38,0)*{}**\dir{-}; (-6,18)*{};(40,0)*{}**\dir{-};
(6,18)*{};(-28,0)*{}**\dir{-};(6,18)*{};(-26,0)*{}**\dir{-};(6,18)*{};(-24,0)*{}**\dir{-};(6,18)*{};(-22,0)*{}**\dir{-}; (6,18)*{};(-20,0)*{}**\dir{-};
(6,18)*{};(-8,0)*{}**\dir{-};(6,18)*{};(-6,0)*{}**\dir{-};(6,18)*{};(-4,0)*{}**\dir{-};(6,18)*{};(-2,0)*{}**\dir{-}; (6,18)*{};(0,0)*{}**\dir{-};
(6,18)*{};(32,0)*{}**\dir{-};(6,18)*{};(34,0)*{}**\dir{-};(6,18)*{};(36,0)*{}**\dir{-};(6,18)*{};(38,0)*{}**\dir{-}; (6,18)*{};(40,0)*{}**\dir{-};
(15,18)*{};(-28,0)*{}**\dir{-};(15,18)*{};(-26,0)*{}**\dir{-};(15,18)*{};(-24,0)*{}**\dir{-};(15,18)*{};(-22,0)*{}**\dir{-}; (15,18)*{};(-20,0)*{}**\dir{-};
(15,18)*{};(-8,0)*{}**\dir{-};(15,18)*{};(-6,0)*{}**\dir{-};(15,18)*{};(-4,0)*{}**\dir{-};(15,18)*{};(-2,0)*{}**\dir{-}; (15,18)*{};(0,0)*{}**\dir{-};
(15,18)*{};(12,0)*{}**\dir{-};(15,18)*{};(14,0)*{}**\dir{-};(15,18)*{};(16,0)*{}**\dir{-};(15,18)*{};(18,0)*{}**\dir{-}; (15,18)*{};(0,0)*{}**\dir{-};
(-40,2)*{\bullet};(-40,-1)*{v_{ij}};
 (-40,2)*{};(-6,18)*{}**\dir{-}; (-40,2)*{};(6,18)*{}**\dir{-};
\endxy\]
\caption{Blowup of $G$ to construct $G'$}\label{fig:G'}
\end{figure}

\subsection{Arranging original vertices}
 We establish how a $\maxLA$ labeling of $G'$ will arrange the original vertices $V$ relative to $U$. By Lemma \ref{prop:u clump} each $v\in V$ will be either left of, or right of all of $U$.

\begin{prop}\label{prop:u clump}
	Let $\pi^*$ be a maximal labeling of $G'$. Then for each $v_i\in V,$ $v_i$ is either left of all $u\in U$ or right of all $u\in U$. 
\end{prop}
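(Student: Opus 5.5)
Throughout, $n$ denotes half the number of vertices of $G$, $|V|=2n$, and every vertex of $U$ has degree $2n-1$. The plan is to argue by contradiction using the level monotonicity of Lemma \ref{lem:level}. Suppose $\pi^*$ is maximal and some $v_i\in V$ has a vertex of $U$ on each side of it. The idea is to show that the level of $v_i$ is squeezed between the levels of those $U$-vertices in a way that the degree counts make impossible.

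First we record level bounds. For $u\in U$, all neighbours of $u$ lie in $V$, so $|l_{\pi^*}(u)|\le 2n-1$. For $v_i$, its neighbours consist of the $(2n-1)n^4$ vertices of $U\setminus U_i$ together with at most $2n$ vertices of $V_E$. Hence $l_{\pi^*}(v_i)$ equals the difference between the numbers of $U$-neighbours to the right and to the left, up to an error of at most $2n$.

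Let $u$ be the leftmost and $u'$ the rightmost vertex of $U$ in $\pi^*$, with $\pi^*(u)<\pi^*(v_i)<\pi^*(u')$. By Lemma \ref{lem:level},
\[
2n-1\ \ge\ l(u)\ \ge\ l(v_i)\ \ge\ l(u')\ \ge\ -(2n-1),
\]
so $|l(v_i)|\le 2n-1$. Thus $v_i$ has roughly equal numbers of $U$-neighbours on each side, to within $4n$. In particular there are about $(n-\tfrac12)n^4$ vertices of $U$ on each side of $v_i$.

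Now look at these $U$-vertices. Take any $w\in U$ lying left of $v_i$. Monotonicity gives $l(w)\ge l(v_i)$, but this alone is no contradiction, so more is needed. The key step is a global count. The $U$-vertices, about $n^5$ of them, have levels in $[-(2n-1),2n-1]$ and appear in non-increasing order of level. Since they occupy both sides of $v_i$, the level of $v_i$ is sandwiched: every $U$-vertex to its left has level at least $l(v_i)$, and every one to its right has level at most $l(v_i)$.

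Here a swap argument does better than pure level bookkeeping. Compare $\pi^*$ with the arrangement $\pi'$ obtained by moving $v_i$ to the far left, or alternatively to the far right. Moving $v_i$ leftward past a block of $U$-vertices changes $D$ by a computable amount. Since the level sequence is non-increasing and the cut widths are the partial sums \eqref{eq:sumcutwidth},\eqref{eq:sumlevels}, we can write $D_{\pi'}-D_{\pi^*}$ as a sum over the vertices passed.

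Moving $v_i$ one step left past a vertex $w$ changes $D$ by $l(v_i)\cdot(\text{sign})\,-\,l(w)$, corrected by $\pm 2$ if $v_i$ and $w$ are adjacent. Passing a vertex $w\in U\setminus U_i$ also flips one of $v_i$'s edges, which raises $l(v_i)$ by $2$. So after passing about $n^5/2$ neighbours, $v_i$'s level grows by about $n^5$. Meanwhile each passed vertex has level at most $2n-1$ in absolute value. The total gain is then roughly quadratic in the number of vertices passed, of order $n^{10}$, and it dominates the loss of at most $O(n\cdot n^5)$. Hence $\pi'$ would strictly exceed $\pi^*$, contradicting maximality. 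By symmetry we choose whichever direction, left or right, is at least as favourable.

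The main obstacle is this exchange computation. The vertices passed over are not only from $U$; they include other members of $V$ and of $V_E$, whose levels can be as large as about $n^5$. I would handle this by moving $v_i$ only past the $U$-block between it and $u$, and by using monotonicity to bound the levels of the vertices interleaved with that block.

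Failing that, the fallback is a direct averaging argument. The average of $D$ over moving $v_i$ to its two extreme positions exceeds the current value by $\Omega(n^{10})$, because the cost of edges from $v_i$ to $U$ is a convex function of $v_i$'s position with slope changing by $2$ per $U$-neighbour. Convexity forces the maximum to occur at an endpoint of the $U$-span.
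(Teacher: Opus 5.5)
Your argument is correct, and it takes a genuinely different route from the paper's.

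\textbf{The paper's route.} The paper never invokes Lemma~\ref{lem:level} in this proof. It picks $v_i$ as the leftmost vertex of $V$ with $U$-vertices on both sides, so no other vertex of $V$ lies between $v_i$ and $L_i$. It assumes $|L_i|\le|R_i|$, which gives $|R_i|\ge n^5$. It then makes a single transposition of $v_i$ with the leftmost vertex $u_L$ of $L_i$. The at least $|R_i|-n^4$ edges to the right each gain $d$. The loss on edges into the stretch $X$ between $u_L$ and $v_i$ is at most $ab\le d(n^4+O(n^2))$, because $X$ contains few non-neighbours of $v_i$. This yields an explicit improving move for an arbitrary arrangement, with margin of order $d\,n^5$.

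\textbf{Your route.} You use maximality up front, twice, via Lemma~\ref{lem:level}. First, it forces $|l(v_i)|\le 2n-1$, hence about $(n-\frac12)n^4$ $U$-neighbours of $v_i$ on each side. Second, it bounds by $2n-1$ the level of every vertex inside the $U$-span. You then slide $v_i$ to the left end of that span by adjacent transpositions. Each step changes $D$ by $l(v_i)-l(w)+2[v_iw\in E']$, and $l(v_i)$ rises by $2$ per neighbour passed. The total gain is therefore at least $\ell(\ell-1)\approx n^{10}$, where $\ell$ is the number of neighbours passed. The losses are only $O(n|V'|)=O(n^6)$. This avoids choosing a leftmost straddling vertex and the delicate bookkeeping on $X$, and the margin is far larger. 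The cost is that it relies on maximality rather than improving an arbitrary arrangement.

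\textbf{Points to tidy in a write-up.} The obstacle you flag is not real. Even if you slide $v_i$ all the way to position $1$, the levels of the passed vertices sum to the cut width $c_{\pi^*}(\pi^*(v_i)-1)\le|E'|=O(n^6)$. So neither the restriction to the $U$-span nor the convexity fallback is needed. Also state the transposition formula exactly: the adjacency correction is $+2$, and there is no sign factor. The quadratic term counts the roughly $n^5$ neighbours passed, not $n^5/2$.
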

\begin{proof}
	 Let $\pi$ be an arrangement of $G'$ and let $v_i \in V$ be the leftmost  vertex  with $U$ vertices on both its left and right. Define those left and right vertices as
     \[L_i=\{u\in U|\pi(u)<\pi(v_i) \} \textrm{~~and~~} R_i=\{u\in U|\pi(u)>\pi(v_i) \}.\]
Without loss of generality assume $|L_i|\leq |R_i|$ and let $u_L$ be the leftmost vertex of $L_i$. Define a new function $\pi'$ on $G'$ which agrees with $\pi$ everywhere except it swaps the labels of $v_i$ and $u_L$ so that, $\pi'(v_i)=\pi(u_L)$ and $\pi'(u_L)=\pi(v_i)$.
	
    
We  show $D_{\pi'}\geq D_\pi$. As only edges that involve $v_i$ or $u_L$ change from $D_{\pi'}$ and $D_\pi$, we show that
\[D_{\pi'}-D_\pi=\sum_{x\in V'} \pi'(xu_L)-\pi(xu_L)+\sum_{x\in V'} \pi'(xv_l)-\pi(xv_i) >0.\]

This sum is dependent on $d=|\pi(u_L)-\pi(v_i)|$.  Consider the first sum.   Swapping the label of $u_L$ with $v_i$ may have moved $u_L$ closer to each of its at most $2n-1$ connections in $V$ by $d$. Therefore, \[\sum_{x\in V'} \pi'(xu_L)-\pi(xu_L) \geq -d(2n-1).\]
The second sum is more complicated to analyze as there are more edges involving $v_i$. We break up the sum by different sets of edges as 
\[\sum_{x\in V'} \pi'(xv_i)-\pi(xv_i)=\sum_{u\in R_i} \pi'(uv_i)-\pi(uv_i)+\]
\[\sum_{u\in L_i} \pi'(uv_i)-\pi(uv_i)+\sum_{v_{ij}\in V_E} \pi'(v_{ij}v_l)-\pi(v_{ij}v_l).\]
There are at least $|R_i| - n^4$ edges from $R_i$ to $v_i$. Each of these edges contribute $d$ more to the sum under $\pi'$ thus,
\[\sum_{uv_i\in R_i} \pi'(uv_i)-\pi(uv_i)\geq d(|R_i|-n^4).\] 
Next, consider the edges from $L_i$. Let $X=\{x\in V'| \pi(u_L)\leq \pi(x)<\pi(v_i)\}$ be the vertices between $u_L$ and $v_i$.  Notice, that $L_i\subseteq X$ and that $|X|=d$ by the choice of $u_L$ as the leftmost $U$ vertex from $v_i$. Let $a = |L_i - U_i|$ so $a$ is the number of vertices in $L_i$ with an edge to $v_i$.  Let $b = d-a$, be the number of the vertices from $X$ without edges to $v_i$. The sum $\ds \sum_{u\in L_i-U_i} \pi'(uv_i)-\pi(uv_i)$ is minimized when the $a$ vertices with edges to $v_i$ are leftmost in $X$.  In this case,
\[\sum_{u\in L_i} \pi'(uv_i)= \sum_{r=1}^a r \hspace{.5cm} \textrm{and} \hspace{.5cm} \sum_{u\in L_i} \pi(uv_i)= \sum_{r=1}^a b+r.\]
Therefore \[\sum_{u\in L_i} \pi'(uv_i)-\pi(uv_i)\geq -ab \ge |L_i|(n^4 + n^2)\]
where the second inequality follows as $a \le |L_i|$ and $b\le |U_i| + |V_E| \le n^4 + n^2$.

Finally, consider the edges at most $2n-1$ edges between $V_E$ and $v_i$. The contribution of each edge may have decreased by at most $d$. Therefore,
\[\sum_{v_{ij}\in V_E} \pi'(v_{ij}v_i)-\pi(v_{ij}v_i)\geq -d(2n-1).\]
Putting this all together 
\[D_{\pi'}-D_\pi \geq -d(2n-1)+d(|R_i|-n^4)-|L_i|(n^4 + n^2 )-d(2n-1)\]
As $d \ge |L_i|$, this simplifies to,
$D_{\pi'}-D_\pi \geq  d(|R_i|-2n^4 -n^2 -4n + 2).$
As $|R_i| \ge n^5$, the equation is positive for $n\ge 3$. 
\end{proof}

\subsection{Arranging U vertices}
By proposition \ref{prop:u clump} in a maximal labeling $\pi$ each $v\in V$ lies to the left of all $u\in U$  or to the right of all $u\in U$. We define $V_L$ and $V_R$ for each possibility as  
\[V_L(\pi)=\{v\in V|\pi(v)<\pi(u) \text{ for all } u\in U\}\]
\[V_R(\pi)=\{v\in V|\pi(v)>\pi(u) \text{ for all } u\in U\}.\]
Next. define $U_L, U_R$ as the $U$ vertices corresponding to $V_L$ and $V_R$ as
\[U_L(\pi)=\bigcup_{i\in \{j|v_j \in V_L\}} U_i 
\textrm{~~~and~~~} U_R(\pi)=\bigcup_{i\in \{v_j \in V_R\}} U_i .\]
Without loss of generality, denote the vertices of $V_L(\pi)$ as $v_1, v_2, \ldots, v_{k-1}$ where
$\pi(v_1) >\pi(v_2) > \cdots > \pi(v_{k-1})$,  and denote the vertices of $V_R(\pi)$ as $v_{k}, \ldots, v_{2n}$ where 
$\pi(v_{k})>\pi(v_{k+1})>\cdots>\pi(v_{2n}).$
The next lemma fixes an ordering of $U$ based on this notation and establishes that $U$ will be ordered left to right as $U_1, U_2, \ldots,U_{2n}$. Figure \ref{fig:u neat buckets} shows the arrangement.

\begin{figure}[htbp!]
\[\xy
(-50,0)*{};(-40,0)*{}**\dir{-};(-45,3)*{V_L};
(-50,0)*{\bullet};(-50,-3)*{v_{k-1}};(-40,0)*{\bullet};(-40,-3)*{v_1};
(-35,0)*{};(-25,0)*{}**\dir{-};(-30,3)*{U_1};(-15,0)*{};(-5,0)*{}**\dir{-};(-10,3)*{U_{k-1}};
(-20,0)*{\ldots};
(50,0)*{};(40,0)*{}**\dir{-};(45,3)*{V_R};
(50,0)*{\bullet};(50,-3)*{v_{k}};(40,0)*{\bullet};(40,-3)*{v_{2n}};
(20,0)*{\ldots};
(35,0)*{};(25,0)*{}**\dir{-};(30,3)*{U_{2n}};(15,0)*{};(5,0)*{}**\dir{-};(10,3)*{U_{k}}; 
(-32,6)*{};(-5,6)*{}**\frm{^)};(-20,9)*{U_L};
(32,6)*{};(5,6)*{}**\frm{^)};(20,9)*{U_R};
\endxy\]
\caption{Arrangement of vertices after Lemma \ref{lemma:u arrangement}}\label{fig:u neat buckets}
\end{figure}


\begin{lemma}\label{lemma:u arrangement}
    If $\pi^*$ is a maximal linear arrangement of $G'$ then for all $u_l\in U_L$ and $u_r\in U_R$, $\pi^*(u_l)<\pi^*(u_r)$. Additionally, for $u_i\in U_i, u_j\in U_j$ if $i<j$ then $\pi^*(u_i)<\pi^*(u_j)$. 
\end{lemma}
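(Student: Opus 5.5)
The plan is to read off the relative order of $U$ vertices from their levels and apply Lemma~\ref{lem:level}. By Proposition~\ref{prop:u clump}, in a maximal $\pi^*$ every $v\in V$ lies in $V_L(\pi^*)$ or $V_R(\pi^*)$. The vertices of $U$ are adjacent only to vertices of $V$, and $u\in U_i$ is adjacent exactly to $V\setminus\{v_i\}$. So the level of $u\in U_i$ depends only on which side $v_i$ is on:
\[
l_{\pi^*}(u)=|V_R\setminus\{v_i\}|-|V_L\setminus\{v_i\}|=
\begin{cases}|V_R|-|V_L|+1, & v_i\in V_L,\\ |V_R|-|V_L|-1, & v_i\in V_R.\end{cases}
\]
Hence every vertex of $U_L$ has level exactly $2$ larger than every vertex of $U_R$.

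For the first claim I use the contrapositive of Lemma~\ref{lem:level}. Let $u_l\in U_L$ and $u_r\in U_R$, and suppose $\pi^*(u_r)<\pi^*(u_l)$. Then $l_{\pi^*}(u_r)\ge l_{\pi^*}(u_l)$, which contradicts the strict inequality above. So $\pi^*(u_l)<\pi^*(u_r)$ in every maximal arrangement, with no exchange argument needed.

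For the second claim, take $u_i\in U_i$ and $u_j\in U_j$ with $i<j$. If $v_i\in V_L$ and $v_j\in V_R$, the first claim already gives $\pi^*(u_i)<\pi^*(u_j)$. The remaining cases have both $v_i,v_j$ on the same side, and these are the main obstacle: $u_i$ and $u_j$ then have equal levels, so Lemma~\ref{lem:level} only says they can be exchanged without decreasing $D_{\pi^*}$.

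I would first try a direct computation. Use \eqref{eq:sumcutwidth} and \eqref{eq:sumlevels} to write $D_{\pi^*}$ as the sum of prefix sums of levels. The levels of all $U$ vertices are fixed by the $V_L/V_R$ split. The level of each $v\in V$ is also fixed, since all of $U$ lies on one side of it. The $V_E$ vertices are adjacent only to $V$, so they are unaffected as well. It follows that permuting the blocks $U_i$ within $U_L$, or within $U_R$, leaves $D_{\pi^*}$ unchanged. A strict exchange inequality is therefore not available.

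So the second claim has to come from how the indices are fixed, and I would make that explicit. The names $v_1,\dots,v_{2n}$ are assigned ``without loss of generality.'' I would use that freedom so that within $V_L$ the ordering of the $v$'s and of the corresponding blocks $U_i$ are matched. Concretely, I would index the blocks of $U_L$ left to right as $U_1,\dots,U_{k-1}$, and likewise the blocks of $U_R$ as $U_k,\dots,U_{2n}$. The computation above is what justifies this: the positions of the $v$'s are constrained only up to equal-level swaps, so permuting the labels to match changes nothing. With the indexing so chosen, $\pi^*(u_i)<\pi^*(u_j)$ for $i<j$ holds for the given $\pi^*$ itself, not only for a normalized arrangement.

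I also need the blocks $U_i$ to be contiguous. Here I expect to argue directly: for $u,u'\in U_i$ and $w\in U_{i'}$ on the same side, interleaving $u,w,u'$ is also $D$-neutral. This means contiguity likewise holds only after the same choice of names, and I flag it as the delicate point of the argument.
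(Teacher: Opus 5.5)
Your first claim is correct and is the paper's argument: vertices of $U_L$ have level $|V_R|-|V_L|+1$, vertices of $U_R$ have level $|V_R|-|V_L|-1$, and Lemma~\ref{lem:level} puts all of $U_L$ left of $U_R$. The gap is in the second claim.

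You correctly show that permuting the vertices of $U_L$ among the positions occupied by $U_L$ leaves every level, and hence $D_{\pi^*}$, unchanged (likewise for $U_R$). But you then try to get the ordering ``for the given $\pi^*$ itself'' by choosing names, and that cannot work. The blocks are determined by the graph: $U_i$ is exactly the set of $U$-vertices not adjacent to $v_i$. Renaming therefore only permutes block labels and never changes which positions a block occupies. Your own neutrality observation produces maximal arrangements in which two blocks interleave: take any maximal arrangement and exchange one vertex of a block with one vertex of another block on the same side. For such a $\pi^*$, no naming makes all of $U_i$ precede all of $U_j$. So your claim that contiguity holds ``after the same choice of names'' is false, and the point you flag as delicate is in fact a counterexample to the literal second claim for arbitrary maximal $\pi^*$. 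The names are not free in any case. The paper fixes $v_1,\ldots,v_{k-1}$ and $v_k,\ldots,v_{2n}$ by their positions before the lemma. Reordering the $v$'s to match the blocks is also not an equal-level swap, since in a maximal arrangement $v\in V_L$ has level $(2n-1)n^4+\deg_G(v)$, which depends on the degree.

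The missing step is to move vertices rather than names, reading the second claim as a normalization, which is what the paper does. All vertices of $U_L$ share one level, and likewise for $U_R$. So rearrange $U_L$ within its own positions into consecutive blocks $U_1,\ldots,U_{k-1}$, and $U_R$ into $U_k,\ldots,U_{2n}$, leaving the $v$'s in place. By your own computation $D$ is unchanged, so the new arrangement is still maximal and satisfies both claims. This is the form in which the lemma is used afterwards (``a labeling as in Lemma~\ref{lemma:u arrangement}'').
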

\begin{proof}
The level of $u_l\in U_L$ is $l_{\pi^*}(u)=\mathfrak{r}_{\pi*}(u_l)-\mathfrak{l}_{\pi*}(u_l)=|V_R|-(|V_L|-1)$. Similarly, the level of $u_r\in U_R$ is $l_{\pi^*}(u_r)=|V_R|-1-|V_L|$.  Since $l_{\pi^*}(u_l)>l_{\pi^*}(u_r)$, by Lemma \ref{lem:level}, $\pi^*(u_l)<\pi^*(u_r)$. Furthermore, as vertices in $U_L(\pi^*)$ have the same level they can be rearranged among themselves so that $\pi^*(u_i) <\pi^*(u_j)$ when $i<j$ without decreasing $D_{\pi^*}$. The same holds for $U_R(\pi^*)$.
\end{proof}

\subsection{Bisection of original vertices}
We now show that the maximal arrangement will bisect the vertices of $G$.
\begin{lemma}\label{prop:vsplit}
Let $\pi$ be a labeling of $G'$ as in Lemma \ref{lemma:u arrangement}, then there is a labeling $\pi'$ of $G'$ such that $|V_L(\pi')| = |V_R(\pi')|$ and $D_{\pi'}\geq D_\pi$.
\end{lemma}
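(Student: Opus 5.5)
Let $\ell=|V_L(\pi)|$, so $|V_R(\pi)|=2n-\ell$, and assume by symmetry that $\ell<n$; the case $\ell>n$ is the mirror image under reversing the arrangement. The plan is to repeatedly move one vertex from $V_R$ to $V_L$ and show $D$ does not decrease, until $\ell=n$. The natural candidate is the vertex $v_k$: it is the rightmost vertex of $V_L\cup V_R$ listed first in $V_R$, and its block $U_k$ is the leftmost block of $U_R$, adjacent to $U_L$. The new labeling $\pi'$ is built in three steps. First, place $v_k$ at the far left, just left of $v_{k-1}$. Second, shift $U_k$ so it becomes the last block of $U_L$, which it already is positionally. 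Third, shift the other vertices to fill the vacated slot. The $V_E$ vertices are left where they are, or shifted by one position, and I will bound their change crudely.

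To compare $D_{\pi'}$ with $D_\pi$, I will use the cut-width identity \eqref{eq:sumcutwidth} together with the level description \eqref{eq:sumlevels}. By Lemma \ref{lemma:u arrangement}, a vertex of $U_L$ has level $(2n-\ell)-(\ell-1)$ and a vertex of $U_R$ has level $(2n-\ell-1)-\ell$. So the cut width across the $U$ region is essentially a quadratic function of position, with magnitude of order $n^4\cdot n\cdot(\text{levels})$. Concretely, I will first compute $D_\pi$ restricted to the $V$--$U$ edges as an explicit function $f(\ell)$. Vertex $v_i\in V_L$ contributes the sum of distances to all of $U\setminus U_i$. Since each $v$ sits outside $U$, this is $(2n-1)n^4\cdot(\text{distance to the }U\text{ region}) + (\text{sum of the positions of }U\setminus U_i\text{ inside }U)$. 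The dominant term depends only on how far the $U_j$ ($j\neq i$) are from $v_i$. By symmetry, writing the total as $\sum_{v\in V_L}\sum_{u\notin U_v}(\pi(u)-\pi(v))+\sum_{v\in V_R}\cdots$, the main term is roughly $n^8\bigl(\ell(2n-\ell)\cdot c + \text{const}\bigr)$. The cross pairs, with $v\in V_L$ and $u\in U_R$ or symmetrically, are the ones whose distance spans the whole $U$ region. There are about $(\ell(2n-\ell)+\ell(2n-\ell))\,n^4$ such pairs, each of length about $n^5$. Hence $f(\ell)\approx 2n^9\ell(2n-\ell)$ up to lower-order terms, which is strictly increasing for $\ell<n$. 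Each unit increase of $\ell$ toward $n$ gains at least about $2n^9(2n-2\ell-1)\ge 2n^9$.

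Next I will bound everything the move can lose. The $V$--$V_E$ edges number at most $2|E|\le 2n^2$, and each can change by at most $|V'|=O(n^5)$, for a total of $O(n^7)$. The secondary terms of $f$ (positions within $U$, and the shifts of $V$ vertices by $O(n)$ positions) contribute $O(n\cdot n^4\cdot n\cdot n)=O(n^7)$. Thus the gain $\Omega(n^9)$ dominates for $n$ large enough (say $n\ge 3$ after checking constants), and each step gives $D_{\pi'}\ge D_\pi$. Iterating $n-\ell$ times yields $|V_L(\pi')|=|V_R(\pi')|$. After each step I will reapply Lemma \ref{lemma:u arrangement}, or simply rearrange equal-level $U$ vertices via Lemma \ref{lem:level}, so that the next step starts from a labeling of the same form.

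The main obstacle is the exact bookkeeping of $f(\ell)$. One must confirm that the leading coefficient really is of order $n^9$ with the correct sign, and that the lower-order terms cannot reverse it when $\ell=n-1$, where the gain is smallest. If the clean closed form is messy, I will instead compute the change directly via levels. Moving $v_k$ from the right end to the left end flips all $(2n-1)n^4$ of its $U$-edges from leftward to rightward. This changes every cut width inside the $U$ region by $\pm 1$ according to whether the position lies left or right of $U_k$. Summing these changes gives roughly $n^4\bigl((2n-\ell-1)-\ell\bigr)n^4\cdot(\text{const})$, which is again positive, and of order $n^8$ when $\ell\le n-1$. This still dominates the $O(n^7)$ error.
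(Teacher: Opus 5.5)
Your move is the paper's: send the rightmost vertex $v_k$ of $V_R$ to position $1$, so that $U_k$ becomes the last block of $U_L$. Your main accounting of why this helps is wrong, however, and that is why your two computations disagree ($n^9$ versus $n^8$). The estimate $f(\ell)\approx 2n^9\ell(2n-\ell)$ keeps only the cross pairs and treats the non-cross pairs and the ``positions within $U$'' as lower order. Take $U$ contiguous. The cross pairs sum to about $3n^9\ell(2n-\ell)$. The non-cross pairs $V_L\times U_L$ and $V_R\times U_R$, own blocks included, sum to about $\tfrac{n^8}{2}\bigl(\ell^3+(2n-\ell)^3\bigr)=4n^{11}-3n^9\ell(2n-\ell)$. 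So the order-$n^9$ dependence on $\ell$ cancels exactly. What survives is the correction for each $v_i$ not being adjacent to its own block $U_i$:
\[
f(\ell)\approx 4n^{11}-\tfrac{n^8}{2}\bigl(\ell^2+(2n-\ell)^2\bigr)=4n^{11}-n^{10}-n^8(n-\ell)^2 .
\]
A step therefore gains about $n^8(2n-2\ell-1)$, which is only $n^8$ at $\ell=n-1$. The terms you discard as $O(n^7)$ are in fact the entire gain. The real mechanism is that $v_k$ is adjacent to all of $U$ except $U_k$. Flipping $v_k$ from one end to the other would be neutral up to lower order, except that $U_k$ has only $\ell n^4$ of $v_k$'s neighbours to its left but $(2n-\ell-1)n^4$ to its right.

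Your fallback is exactly the paper's cut-width argument and gives the right quantity, but its description needs fixing. The change of cut width at a position $p$ inside $U$ is not $\pm1$. It is the number of $U$-neighbours of $v_k$ strictly right of $p$ minus the number at or left of $p$. Over the $M=(2n-1)n^4$ neighbour positions this sums to $-M$, since the $j$-th neighbour from the left gives $M-2j$. The $n^4$ positions of $U_k$ each give $(2n-2\ell-1)n^4$, so the net is $n^8(2n-2\ell-1)-M$.

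Against this you must bound three losses, as the paper does:
\begin{itemize}
\item the changes at the positions of $V$, which are $O(n^6)$;
\item the changes caused by $v_k$'s $V_E$-edges, also $O(n^6)$;
\item the $U$-edge changes at positions of $V_E$ vertices. These vertices can lie anywhere relative to $U$ and lose up to $(2n-1)n^4$ each, nearly $4n^7$ in total.
\end{itemize}
With a gain of only $n^8$, this needs $n$ larger than about $4$, not $n\ge 3$.

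Finally, Lemma~\ref{lemma:u arrangement} and the exchange clause of Lemma~\ref{lem:level} are stated for maximal arrangements, so you cannot reapply them to intermediate labelings. You do not need to: the move itself leaves $v_k$ leftmost in $V_L$ and $U_k$ as the last block of $U_L$, and that is what justifies iterating.
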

\begin{proof}
    Let $\pi$ be the labelling established in Lemma \ref{lemma:u arrangement}. Assume without loss of generality that $|V_R(\pi)|>|V_L(\pi)|$ and let $v_k$ be the rightmost vertex in $V_R(\pi)$.  We define a new labeling $\pi'$ which moves $v_k$ into $V_L$ as follows. Shift vertices that are  left of $v_k$ right by one by setting $\pi'(x)=\pi(x)+1$ for all $x\in V'$ where $\pi(x) < \pi(v_k)$. Then set 
    $\pi'(v_k)=1$. For all other $v\in V'$ let $\pi'(v)=\pi(v)$.  
    
    Since the sum of cut widths across all vertices is equal to the linear arrangement sum, we show $D_{\pi'}>D_\pi$ by examining how cut widths change from $\pi$ and $\pi'$.
    For convenience we define the function $\Delta c(v)=c_{\pi'}(v)-c_\pi(v)$ for $v\in V'$ and for $A\subset V'$ we say
    $\Delta c(A)=\sum_{v\in A} \Delta c(v)$. In particular, $\Delta c(V')=D_{\pi'}-D_\pi$ which we wish to show is positive.

    The only edges that change their contribution from $\pi$ to $\pi'$ are those that involve $v_k$. Thus, for any $v\in V'$ $\Delta c(v)$ depends on $v$'s position relative to $N(v_k)$. An edge $xv_k$ where $x\in N(v_k)$ counts in $c_\pi(v)$ if $\pi(x) \le \pi(v)$, and it counts in $c_{\pi'}(v)$ if $\pi'(x) > \pi'(v)$. To simplify calculations, we define for each $v\in V'$ the cut-width contribution from $v_k$ neighbors from $V_E$ and from $U$ separately as
   \begin{eqnarray*}
   c^{E}_{\pi'}(v)&=&|\{x\in V_E \mid xv_k\in E',  \pi'(x)>\pi'(v)\}|\\
   c^{E}_\pi(v)&=&|\{x\in V_E\mid xv_k\in E', \pi(x)\leq\pi(v)\}| \\
   c^{U}_{\pi'}(v)&=&|\{x\in U \mid xv_k\in E', \pi'(x)>\pi'(v)\}|  \\
   c^{U}_\pi(v)&=&|\{x\in U\mid xv_k\in E', \pi(x)\leq\pi(v)\}|
   \end{eqnarray*}
  If we define  $\Delta c^{E}(v)=c^{E}_{\pi'}(v)-c^{E}_{\pi}(v)$ and $\Delta c^{U}(v)=c^{U}_{\pi'}(v)-c^{U}_{\pi}(v)$  then $\Delta c(v)= \Delta c^{U}(v) +\Delta c^{E}(v)$. Extending $\Delta c^{E}$, $\Delta c^{U}$ to sets the way we did with $c_\pi$ we have $\Delta c(V') = \Delta c^{E}(V') + \Delta c^{U}(V')$.
  
  For any $v\in V'$, $\Delta c^{E}(v)\geq -2n$. Summing over the $2n^5 + 2n^2 +2n$ elements of $V'$ we see that $\Delta c^{E}(V')$ decreases by at most 
\begin{equation}\label{eq:c^EV'} |\Delta c^E(V')| \le |-2n|V'|| = 2n(2n^5 + 2n^2 +2n) =  O(n^6) 
\end{equation}
  
  We compute $\Delta c^{U}(V')$ in parts as 
  $\Delta c^{U}(V') = \Delta c^U(V) + \Delta c^U(V_E) + \Delta c^U(U).$ 
  For each $v\in V_L \cup\{v_k\}$, as $v_k$ was moved to the leftmost position of $V_L$ by $\pi'$, $\Delta c^{U}(v)=|U-U_k|= 2n^5-n^4$.  For $v\in V_R-\{v_k\}$, as $v_k$ was in rightmost position of $V_R$ in $\pi$, $\Delta c^{U}(v) = -|U-U_k|= -(2n^5-n^4)$. Combining, we have $\Delta c^{U}(V) = (2n^5-n^4)(|V_L|+1)-(2n^5-n^4)(|V_R|-1) = (2n^5-n^4)(|V_L|-|V_R|+2)$. 
As $|V_L| - |V_R| \ge -2n$, $\Delta c^{U}(V)$ decreases by at most  
     \begin{equation} |\Delta c^{U}(V)| \le |(2n^5-n^4)(-2n+2)| = O(n^6) 
     \label{eq:deltac^uV} \end{equation}

Next, for $v \in V_E$, $\Delta c^{U}(v)\geq -|U-U_k|=-(2n^5-n^4)$. As $|V_E| \le 2n^2$, summing over $V_E$ shows that $\Delta c^{U}(V_E)$ decreases by at most
\begin{equation} |\Delta c^{U}(V_E)|\leq (2n^5-n^4)|V_E| = O(n^7)
\label{eq:deltac^uV_E}\end{equation} 
    
Turning to $\Delta c^{U}(U)$, for  $1\leq i \leq k-1$,  $\Delta c^{U}(U_i)=-\Delta c^{U}(U_{2n-i+1})$.  Similarly, for $k+1\leq i\leq n$,  $\Delta c^{U}(U_i)=-\Delta c^{U}(U_{2n-i+2})$. As a result of these cancellations, $\Delta c^{U}(U)=\Delta c^{U}(U_k)+\Delta c^{U}(U_{n+1})$. 
We look at $\Delta c^U(U_k)$. Under $\pi'$ the cut width of $u\in U_k$ includes all vertices in $\bigcup_{i=k+1}^{2n} U_i$, so $c^U_{\pi'}(u)= n^4(2n-k)$. Under $\pi$ the cut width includes all vertices in $\bigcup_{i=1}^{k} U_i$, so $c^U_\pi(u) = n^4(k-1)$. Thus $\Delta c^U(u) = n^4(2n-2k+1)$. Summing over the $n^4$ vertices of $U_k$ yields $\Delta c^{U}(U_k)=n^8(2n-2k+1).$ As $k<n, \Delta c^{U}(U_k)\ge n^8$.
   
   Finally we calculate $\Delta c^{U}(U_{n+1})$.  As $v_k$ is adjacent to $n^5-n^4$ vertices in $\bigcup_{i=1}^n U_i$ and also in $\bigcup_{i=n+2}^{2n}U_i$, those edges cancel in the computation of $\Delta c^U(U_{n+1})$. Now consider the edges from $v_k$ to $U_{n+1}$ by numbering the vertices of $U_{n+1}$ from left to right as $u_1, u_2, \ldots, u_{n^4}$. 
   For $u_j\in U_{n+1}$, there are $n^4-j$ vertices of $U_{n+1}$ that are right of $u_j$ and $j$ that are left of $u_j$ which are adjacent to $v_k$. So $\Delta c^{U}(u_j)=n^4-2j$. Summing over $1\le j \le n^4$, yields $\Delta c^{U}(U_{n+1}) =-n^4.$ Thus $\Delta c^U(U)$ increases by at least
   $$\Delta c^U(U) = \Delta c^U(U_k) + \Delta c^U(U_{n+1}) \ge n^8 - n^4 = n^8+O(n^4)$$ 
   Combining the above with equations \ref{eq:deltac^uV}, \ref{eq:deltac^uV_E} and  \ref{eq:c^EV'}, we find that $\Delta c(V')$ is positive for sufficiently large $n$ as $n^8$ is the dominant term below
    \begin{eqnarray*}
    \Delta c(V') & = &  \Delta c^E(V') + \Delta c^U(V) + \Delta c^U(V_E) + c^U(U)\\
    &=& O(n^6)+O(n^6)+O(n^7) + n^8 + O(n^4)  \ge 0.
    \end{eqnarray*} 
\end{proof}

\subsection{Arranging edge vertices}
We analyze the arrangement of $V_E$ considering three groups: 
    $V_{LL}=\{v_{ij}| v_i, v_j \in V_L, 1\leq i<j \leq n\}$, both edge endpoints are in $V_L$, 
    
    $V_{RR}=\{v_{ij}| v_i, v_j \in V_R, n+1 \le i < j \le 2n\}$, both endpoints are in $V_R$,
    
    $V_{LR}=V_E-\left(V_{LL}\cup V_{RR}\right)$, one endpoint in $V_L$ and the other in $V_R$. 
    
    \noindent The next lemma shows that $V_{RR}$ will be between $V_L$ and $U_L$, $V_{LR}$ will be between $U_L$ and $U_R$ and $V_{LL}$ will be between $U_R$ and $V_R$. The final arrangement is shown in Figure \ref{fig:final_arrange}.
\begin{lemma}\label{lemma:edgearrange}
Let $\pi^*$ be a maximal labeling of $G'$ that conforms to Lemma \ref{prop:vsplit}.

    If $v_{ij}\in V_{RR}$,  then  for all $u_l\in U_L$, and $v_l\in V_L$,  
    $\pi^*(v_l)< \pi^*(v_{ij}) < \pi^*(u_l)$.  
    
    If $v_{ij}\in V_{LL}$, then for all $u_r \in U_R$, and $v_r\in V_R$, $\pi^*(u_r) <\pi^*(v_{ij}) < \pi^*(v_r)$. 
    
    If $v_{ij} \in V_{LR}$, then for all $u_l \in U_L$ and $u_r \in U_R$,  $\pi^*(u_l) < \pi^*(v_{ij})< \pi^*(u_r)$. 
\end{lemma}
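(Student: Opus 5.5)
The approach is to compute levels and apply Lemma \ref{lem:level}, just as in the proof of Lemma \ref{lemma:u arrangement}. Fix a maximal $\pi^*$ conforming to Lemma \ref{prop:vsplit}, so that $|V_L|=|V_R|=n$. By Proposition \ref{prop:u clump} and Lemma \ref{lemma:u arrangement} the order is: $V_L$, then $U_L$, then $U_R$, then $V_R$, with the $V_E$ vertices placed somewhere among these.

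\textbf{Levels of the fixed vertices.} We compute the level of every vertex other than the edge vertices.
\begin{itemize}
\item Each $u\in U_L$ has level $|V_R|-(|V_L|-1)=1$, since $u\in U_i$ with $v_i\in V_L$ is not adjacent to $v_i$.
\item Each $u\in U_R$ has level $-1$.
\item Each $v\in V_L$ has all of its $U$-neighbors to its right, so its level is at least $(2n-1)n^4-(2n-1)$. This is huge and positive.
\item Symmetrically, each $v\in V_R$ has level at most $-(2n-1)n^4+(2n-1)$.
\end{itemize}

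\textbf{Levels of the edge vertices.} Each $v_{ij}$ has degree $2$, and its two neighbors are $v_i$ and $v_j$. Its level therefore depends only on where it sits relative to these two endpoints.
\begin{itemize}
\item For $v_{ij}\in V_{RR}$: if $v_{ij}$ lies left of both endpoints its level is $2$; if it lies between them, $0$; if it lies right of both, $-2$.
\item For $v_{ij}\in V_{LL}$ the roles are mirrored.
\item For $v_{ij}\in V_{LR}$: if it lies between its endpoints its level is $0$, and otherwise $\pm2$.
\end{itemize}

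\textbf{Placing $v_{ij}\in V_{RR}$.} Suppose $v_{ij}$ is not left of both of its endpoints. Then its level is at most $0$. It is also at least $-2$, which exceeds the level of every $V_R$ vertex, so by monotonicity (Lemma \ref{lem:level}) $v_{ij}$ lies left of all of $V_R$. That contradicts the assumption, so $v_{ij}$ is left of both endpoints and has level $2$.

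Now compare with the neighbors of $v_{ij}$ in the order. Since $2>1$, monotonicity puts $v_{ij}$ strictly left of every $U_L$ vertex. Since $2$ is smaller than every $V_L$ level, it puts $v_{ij}$ strictly right of every $V_L$ vertex. This gives $\pi^*(v_l)<\pi^*(v_{ij})<\pi^*(u_l)$.

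\textbf{Placing $v_{ij}\in V_{LL}$.} The argument is mirror-symmetric: $v_{ij}$ must be right of both endpoints, so its level is $-2$, and monotonicity places it between $U_R$ and $V_R$.

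\textbf{Placing $v_{ij}\in V_{LR}$.} Suppose $v_{ij}$ were left of both endpoints. Its level would be $2$, exceeding every level in $U$, so it would lie left of all of $U$ and hence left of its $V_R$-endpoint. For it to be left of both endpoints it must also be left of its $V_L$-endpoint. That is still consistent with levels, because $2$ is less than the $V_L$ levels, so this case is where the real obstacle lies.

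In this configuration $v_{ij}$ sits right of all of $V_L$, which contradicts being left of its $V_L$-endpoint. This contradiction is genuine, not just possible: a level-$2$ vertex must be right of every $V_L$ vertex, since their levels are larger. Symmetrically, $v_{ij}$ cannot be right of both endpoints. Hence $v_{ij}$ lies between its endpoints and has level $0$.

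Since $-1<0<1$, Lemma \ref{lem:level} places $v_{ij}$ strictly between $U_L$ and $U_R$.

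\textbf{The main obstacle.} Lemma \ref{lem:level} only gives non-strict monotonicity. So every strict placement must come from a strict inequality of levels, which is why the key point is that the values $2$, $0$, $-2$ never tie with $\pm1$ or with the huge $V$ levels. This holds for $n\ge2$.

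The other subtlety is that ties among the $V_E$ vertices themselves do not matter. The lemma only places each $V_E$ vertex relative to the $U$ and $V$ blocks, not relative to other edge vertices.
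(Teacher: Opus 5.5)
Your proof is correct and is essentially the paper's argument. Like the paper, you compute the levels ($V_L$ above $n^4$, $V_R$ below $-n^4$, $U_L$ equal to $1$, $U_R$ equal to $-1$, edge vertices in $\{-2,0,2\}$) and read off each strict placement from a strict level inequality via Lemma \ref{lem:level}. The paper streamlines one step: since every $v_{ij}$ has level strictly between all $V_L$ and $V_R$ levels, all of $V_E$ lies between $V_L$ and $V_R$ at once, which makes your per-case contradiction arguments unnecessary. In your $V_{LR}$ case, the sentence calling the configuration ``still consistent with levels'' is wrong as written, since a level-$2$ vertex cannot precede a $V_L$ vertex; you correct this in the next sentence, but it should be removed.
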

\begin{proof}
Observe that as all $U$ vertices appear to the left of $v_1$, the level of $v_l\in V_L$ is at least $(2n-1)n^4-(2n-1) >n^4$. Similarly the level of $v_r\in V_R$ is at most $(2n-1)-(2n-1)n^4 <-n^4$.
Next, observe that the level of $v_{ij} \in V_E$ must be one of the following values $\{-2, 0, 2\}$ as each $v_{ij}$ has exactly two neighbors. Thus $l(v_l)> n^4 > l(v_{ij}) > l(v_r) >-n^4$ which means that a $v_{ij} \in V_E$ must lie between $V_L$ and $V_R$. So we have $\pi^*(v_l) <\pi^*(v_{ij}) < \pi^*(v_r)$.

Now consider the level of $U$ vertices. Since by Lemma \ref{prop:vsplit} $|V_L| = |V_R|$, the level of $u_l\in U_L$ is $|V_R| - (|V_L|-1) = 1$, and the level of $u_r\in U_R$ is $|V_R|-1 - |V_L| = -1$.

The two neighbors of $v_{ij} \in V_{RR}$, are both in $V_R$, so $l(v_{ij}) = 2 > l(u_l) = 1$. Thus, $v_{ij}$ lies to the left of $u_l \in U_L$ so $\pi^*(v_{ij}) < \pi^*(u_l)$.

For $v_{ij} \in V_{LL}$, as its two neighbors are both in $V_L$, $l(v_{ij}) = -2 < l(u_r) = -1$. Thus, $v_{ij}$ must lie to the right of $u_r\in U_R$ so $\pi^*(u_r) < \pi^*(v_{ij})$.

Finally $v_{ij} \in V_{LR}$, has one neighbor in each $V_L$ and $V_R$, so $l(v_{ij}) = 0$. Thus $v_{ij}$ must lie between $u_l\in U_L$ and $u_r \in U_R$ so $\pi^*(u_l) < \pi^*(v_{ij}) < \pi^*(u_r)$.
\end{proof}

\subsection{Reduction to minimum bisection}

\begin{theorem}\label{thm:np}
$\maxLA$ is NP complete for bipartite graphs.
\end{theorem}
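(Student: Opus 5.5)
Membership in NP is immediate: a labeling $\pi$ is a certificate, and $D_\pi$ can be computed in polynomial time and compared with $k$. For hardness I would reduce minimum bisection to $\maxLA$ on the bipartite graph $G'$ of Section~\ref{sec:graph}. First I would check that $G'$ is bipartite with parts $V$ and $V_E\cup U$, since every edge of $E'$ has exactly one endpoint in $V$. I would also check that $|V'| = 2n + |E| + 2n^5$ is polynomial, so $G'$ can be built in polynomial time.

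Next I would pin down the structure of a maximum arrangement. Applying Proposition~\ref{prop:u clump}, Lemma~\ref{lemma:u arrangement}, Lemma~\ref{prop:vsplit} and Lemma~\ref{lemma:edgearrange} in turn, there is a maximum arrangement $\pi^*$ of $G'$ with the following left-to-right order. First come $V_L$ with $|V_L|=n$, then $V_{RR}$, then $U_L$, then $V_{LR}$, then $U_R$, then $V_{LL}$, and finally $V_R$. By Lemma~\ref{lem:level}, vertices of equal level can be permuted within each block without loss, and each block is a level class. Hence once the partition $(V_L,V_R)$ is fixed, the arrangement is determined up to sum-preserving swaps, except for the internal order of $V_L$ and of $V_R$. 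Those internal orders I would fix by the level ordering as well; I expect the sum contributed by the $V$--$U$ edges to be independent of the order within $V_L$, because the $U$ block is contiguous.

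The core step is to compute $D_{\pi^*}$ as a function of the bisection. I would write it as $D_{\pi^*} = C(n,|E|) - \alpha\, e(V_L,V_R) + (\text{lower-order, bisection-independent terms})$, where $C$ and $\alpha>0$ depend only on $n$ and $|E|$, and $e(V_L,V_R)$ is the number of cut edges. The easiest route is to sum cut widths via \eqref{eq:sumcutwidth} and \eqref{eq:sumlevels}. The level of every vertex depends only on its block: $V_L$ vertices have level around $(2n-1)n^4$ plus their $V_E$ degree, $V_{RR}$ has level $2$, $U_L$ has $1$, $V_{LR}$ has $0$, $U_R$ has $-1$, $V_{LL}$ has $-2$, and $V_R$ is negative. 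So the cut widths are explicit prefix sums.

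The main obstacle is the dependence on the sizes $|V_{RR}|,|V_{LR}|,|V_{LL}|$. Since $|V_{LL}|+|V_{LR}|+|V_{RR}|=|E|$, the $V_E$ vertices sit in positions whose distances to their neighbours vary with the cut. I would compute the contribution of each edge-vertex $v_{ij}$ directly as the sum of distances to its two endpoints. A $V_{LR}$ vertex sits in the middle across the roughly $2n^5$ $U$ vertices and contributes about $|U|$ plus position-independent terms. A $V_{LL}$ or $V_{RR}$ vertex sits on the far side of $U$ from both of its neighbours and contributes about $2|U|$. The difference per cut edge is therefore about $|U|=2n^5$, and this dominates the small shifts in the positions of the $V$ and $U$ vertices, which contribute only $O(n^4\cdot|E|)$ per change. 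I would make this precise by showing that $D_{\pi^*} = F - (2n^5 + O(n^4)) \cdot e(V_L,V_R)$, with the $O(n^4)$ error bounded uniformly. That bound allows the threshold $k$ to be chosen as $F - (2n^5+\text{correction})\,b$, strictly separating bisections of width $\le b$ from those of width $\ge b+1$; if the error term is only approximately known, I would instead use rounding, choosing $k$ at the midpoint between consecutive widths.

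Finally I would argue both directions. Given a bisection of $G$ with at most $b$ cut edges, build the arrangement described above; its sum is at least $k$. Conversely, if $D(G')\ge k$, normalize a maximum arrangement using the lemmas. Then $(V_L,V_R)$ is a bisection, and the formula forces its width to be at most $b$.
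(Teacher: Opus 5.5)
Your plan follows the paper's route: the same reduction from minimum bisection, the same normal form from Proposition~\ref{prop:u clump} and Lemmas~\ref{lemma:u arrangement}--\ref{lemma:edgearrange}, and then an explicit formula for $D_{\pi^*}$ in terms of $m_C=e(V_L,V_R)$. Your NP-membership, bipartiteness and explicit-threshold steps usefully fill in what the paper leaves implicit. But your core quantitative claim is wrong.

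You charge a cut edge only through the two edges at its vertex $v_{ij}$. That gives about $N$ for $V_{LR}$ versus about $2N$ for $V_{LL}\cup V_{RR}$, where $N=|V'|\approx 2n^5$. You then treat the resulting displacement of the $U$ block as lower order, and it is not. Each $V_{LR}$ vertex pushes all of $U_R$ one position to the right. This lengthens the $n\cdot n^5$ edges from $V_L$ to $U_R$, but shortens only $n(n^5-n^4)$ edges from $V_R$ to $U_R$, because each $v\in V_R$ misses its own block $U_v\subseteq U_R$.

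Exactly: since $\sum_{v\in V_L}\sum_{u\in U\setminus U_v}\pi(u)=n\sum_{u\in U}\pi(u)-\sum_{u\in U_L}\pi(u)$, and similarly for $V_R$, the $V$--$U$ edges contribute
\[
\sum_{u\in U_R}\pi(u)-\sum_{u\in U_L}\pi(u)+(\text{bisection-independent terms}) = n^5m_C+(\text{bisection-independent terms}).
\]
By contrast, $V_{RR}$ and $V_{LL}$ vertices move $U_L$ and $U_R$ together and contribute nothing. This cancels half of your $2n^5$. The true penalty is $n^5+O(n^2)$ per cut edge, i.e.\ $D_{\pi^*}=F-n^5m_C+O(n^4)$, which is exactly the paper's final formula. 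The bound you state for the shifts, $O(n^4\cdot|E|)$ per change, cannot rescue the argument either: it can be of order $n^6$, so it would swamp even the $2n^5$ you claim.

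This matters for your version of the proof, because your threshold is built from the coefficient. With $k\approx F-2n^5b$, a bisection of width up to roughly $2b$ already reaches $k$, so the reduction would accept no-instances.

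The repair is to do the $V$--$U$ computation exactly as above and combine it with the $V$--$V_E$ edges. A $V_{LR}$ vertex contributes $N-O(n)$, and a $V_{RR}$ or $V_{LL}$ vertex contributes $2N-O(n+m)$. This gives $D(G')=\max_S\bigl[F-\alpha\, m_C(S)+g(S)\bigr]$ over bisections $S$, where:
\begin{itemize}
\item $\alpha=n^5+2n+m$;
\item $F$ depends only on $n$ and $m$;
\item $|g(S)|\le cn^4$ for an explicit constant $c$, coming from the $m_R^2$, $m_L^2$ and within-block position terms.
\end{itemize}
Setting $k=F-\alpha b-cn^4$ then separates width $\le b$ from width $\ge b+1$ as soon as $\alpha>2cn^4$, i.e.\ for $n$ above a constant. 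Smaller instances can be solved by brute force.
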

\begin{proof}
 Given input $G=(V,E)$ for the minimum bisection problem we construct $G'$ as described in section \ref{sec:graph}. A $\maxLA$ arrangement of $G'$, $\pi$, satisfies the conditions of Lemma \ref{lemma:edgearrange}. Figure \ref{fig:final_arrange} shows how $V'$ is arranged by $\pi$ and denotes the number of vertices in each region as $m=|V_E|$ and $m_L=|V_{LL}|, m_R = |V_{RR}|$, and $m_C=|V_{LR}|$.  By Lemma \ref{prop:vsplit}, the original vertices of $G$ are bisected and $|V_L|=|V_R|=n$. 
To finish the proof we show that $D_{\pi}$ is maximized when the number of edges between the bisected vertices, $m_C$, is minimized.  Thus solving $\maxLA$ on $G'$ gives us a minimum bisection solution for $G$.

\begin{figure}[h!]
\[\xy
(-60,0)*{\bullet};(-50,0)*{\bullet}**\dir{-};(-55,8)*{V_L};(-55,-3)*{n};(-60,3)*{v_n};(-50,3)*{v_1};
(-48,0)*{};(-38,0)*{}**\dir{-};(-43,8)*{V_{RR}};(-43,-3)*{m_R};
(-36,0)*{};(-32,0)*{}**\dir{-};(-34,8)*{U_1};(-28,0)*{\cdots};(-34,-3)*{n^4};
(-24,0)*{};(-20,0)*{}**\dir{-};(-22,8)*{U_i};(-15,0)*{\cdots};(-22,-3)*{n^4};
(-34,12)*{};(-6,12)*{}**\frm{^)};(-20,15)*{U_L};
(-22.5,0)*{\bullet};(-22.5,3)*{u_{ij}};
(-10,0)*{};(-6,0)*{}**\dir{-};(-8,8)*{U_n};(-8,-3)*{n^4};
(-4,0)*{};(4,0)*{}**\dir{-};(0,8)*{V_{LR}};(0,-3)*{m_C};
(54,0)*{\bullet};(44,0)*{\bullet}**\dir{-};(49,8)*{V_R};(49,-3)*{n};(44,3)*{v_{2n}};(54,3)*{v_{n+1}};
(42,0)*{};(32,0)*{}**\dir{-};(37,8)*{V_{LL}};(37,-3)*{m_L};
(30,0)*{};(26,0)*{}**\dir{-};(28,8)*{U_{2n}};(19,0)*{\cdots};(28,-3)*{n^4};
(12,0)*{};(8,0)*{}**\dir{-};(10,8)*{U_{n+1}};(10,-3)*{n^4};
(28,12)*{};(6,12)*{}**\frm{^)};(20,15)*{U_R};
\endxy\]
\caption{Arrangement of vertices as in Lemma \ref{lemma:edgearrange}}\label{fig:final_arrange}
\end{figure}
To calculate $D_{\pi}$ we sum cut widths, which are themselves found by summing level values of a vertex and those to its left using equations \ref{eq:sumcutwidth} and \ref{eq:sumlevels}.

We calculate cut widths for each set of vertices separately starting with $v_i\in V_L$. For $v_i\in V_L$, $l_\pi(v_i)=2n^5-n^4+\text{deg}(v_i)$. As $0\le\text{deg}(v_i)\le 2n-1$, by equation \ref{eq:sumlevels} the cut width of $v_i$ is
\[c_\pi(v_i) = \sum_{j=1}^{n-i+1}l_\pi(v_{n-j+1}) = \sum_{j=1}^{n-i+1}2n^5 - n^4  + O(n).\] 
Summing across the $n$ vertices in $V_L$, we find 
\[c_\pi(V_L)  =  \sum_{i=1}^n(n-i+1)(2n^5-n^4 +O(n))  =  \frac{n(n+1)}{2}(2n^5-n^4+O(n))\] 
Next consider $v_i \in V_R$. We sum the levels of all vertices to its left. The sum of the levels of $V_L$ is given by $c_\pi(v_1)$. For edge vertices $e\in V_{RR}$, $l_\pi(e)=2$, for $e\in V_{LR}$, $l_\pi(e)=0$ and for $e\in V_{LL}$,  $l_\pi(e)=-2$. For $u\in U_L$, $l_\pi(u)=1$, and for $u\in U_R$, $l_\pi(u)=-1$.  Finally, for $v_i \in V_R$, $l_\pi(v_i) = -2n^5-n^4-\text{deg}(v_i) = -2n^5-n^4 + O(n)$. Thus the cut width for $v_i \in V_R$ where $n+1\leq i \leq 2n$ is 
 \[ c_\pi(v_i) = c_\pi(v_1)+2(m_R-m_L) -(2n-i+1)(2n^5-n^4+O(n))
 \]
Summing $c_\pi(v_i)$ over the vertices in $V_R$ for $n+1\leq i \leq 2n$ yields 
\[c_\pi(V_R) \ge 
nc_\pi(v_1)+2n(m_R-m_L)-\frac{n(n+1)}{2}(2n^5-n^4+O(n))\]
Calculating $c_\pi(v_1)$ directly--there are $n(2n^5-n^4)$ edges from $V_L$ to $U$, $2m_L$ edges to $V_{LL}$, and $m_C$ edges to $V_{LR}$. Thus,
\[c_\pi(v_1)=n(2n^5-n^4)+2m_L+m_C
  = 2n^6 - n^5 + 2m_L + m_C\]
Summing $c_\pi(V_L)$ and $c_\pi(V_R)$ and substituting in the value of $c_\pi(v_1)$ we obtain
\[c_\pi(V) = c_\pi(V_L) + c_\pi(V_R) =  n(2n^6-n^5+2m_R + m_C) = 2n^7-n^6+n(2m_R+m_C)\]
As $O(m)\leq O(n^2)$, the sum of $V$'s cut widths is $c_\pi(V)=2n^7-n^6+O(n^3)$. 

Next we look at the cut widths sums of $V_{RR}, V_{LL}$ and $V_{LR}$.  Label the vertices of each set from left to right as $e_1, e_2, \ldots$. For $e_i \in V_{RR}$ $c_\pi(e_i)=c_\pi(v_1)+2i$. For $e_i \in V_{LR}$, $c_{\pi}(e_i) = c_\pi(v_1)+2m_R+n^5$. For $e_i \in V_{LL}$, $c_\pi(e_i) = c_\pi(v_1) + 2m_R -2i$. Summing cut widths over the size of each set we have,
\begin{eqnarray*}
c_\pi(V_{RR}) &=& m_Rc_\pi(v_1)+m_R(m_R+1)\\ 
c_\pi(V_{LR})&=&m_Cc_\pi(v_1)+2m_Rm_C+m_C n^5\\
c_\pi(V_{LL})&=&m_Lc_\pi(v_1)+2m_Rm_L-m_L(m_L+1)
\end{eqnarray*}
We sum the three parts above and use that $m=O(n^2)$ to find that  
\begin{eqnarray*}
    c_\pi(V_E) &=& mc_\pi(v_1)+2m_R(m_C+m_L)+m_Cn^5\\
    & &+m_R(m_R+1)-m_L(m_L+1)\\
    &=&2mn^6-mn^5+m_Cn^5+O(n^4) 
\end{eqnarray*}

Next we look at $U_{LL}$ and $U_{RR}$. Labeling $u_{ij}\in U_L$ from left to right with $1\le i\le n$ and $1\le j\le n^4$,  $c_\pi(u_{ij})=c_\pi(v_1)+2m_R+n^4(i-1)+j$. Thus,
\begin{eqnarray*}
c_\pi(U_L)&=&\sum_{i=1}^n\sum_{j=1}^{n^4}c_\pi(v_1)+2m_R+n^4(i-1)+j\\&=&n^5c_\pi(v_1)+2n^5m_R+\frac{n(n-1)}{2}n^8+n\frac{n^4(n^4+1)}{2}.\end{eqnarray*}
Similarly, for $u_{ij}\in U_R$ with $n+1\leq i\leq 2n$, and $1 \le j \le n^4$, 
 $c_\pi(u_{ij})=c_\pi(v_1)+2m_R+n^5-n^4(i-n-1)-j$. Thus,
\[c_\pi(U_R)=n^5c_\pi(v_1)+2n^5m_R+n^{10}-\frac{(n-1)n}{2}n^8-n\frac{n^4(n^4+1)}{2}.\]
Summing $c_\pi(U_L)$ and $c_\pi(U_R)$ and substituting the value of $c_\pi(v_1)$ we have
\begin{eqnarray*}
    c_\pi(U)&=&2n^5c_\pi(v_1)+4n^5m_R+n^{10}\\
    &=& 2n^5(2n^6-n^5+2m_L+m_C)+4n^5m_R+n^{10}\\
    &=&4n^{11}-n^{10}+4n^5m_L+2n^5m_C+4n^5m_R\\
     &=&4n^{11}-n^{10}+4mn^5-2n^5m_C
\end{eqnarray*}
Putting the above together we calculate $D_{\pi}$ to be 
\begin{eqnarray*}
    D_{\pi} &=& c_\pi(V)+c_\pi(V_E)+c_\pi(U)\\
    &=& 2n^7-n^6+O(n^4)\\
    & &+2mn^6-mn^5+m_Cn^5+O(n^4)\\
    & & +4n^{11}-n^{10}+4mn^5-2m_Cn^5\\
    &=&4n^{11}-n^{10}+2n^7-n^6+2mn^6+3mn^5-m_Cn^5+O(n^4).
\end{eqnarray*}   
As $n$ and $m$ are fixed by the input $G$, and the term involving $m_c$ dominates $O(n^4)$, $D_{\pi}$ is maximized when $m_C$ is minimized.  Since minimizing $m_C$ (minimum bisection) is NP-hard \cite{garey}, so is maximizing $D_{\pi}$.
\end{proof}

\bibliographystyle{plainurl}
\bibliography{refs.bib}

@InProceedings{azarVainstein,
  author =	{Azar, Yossi and Vainstein, Danny},
  title =	{{Multi Layer Peeling for Linear Arrangement and Hierarchical Clustering}},
  booktitle =	{50th International Colloquium on Automata, Languages, and Programming (ICALP 2023)},
  pages =	{13:1--13:18},
  series =	{Leibniz International Proceedings in Informatics (LIPIcs)},
  ISBN =	{978-3-95977-278-5},
  ISSN =	{1868-8969},
  year =	{2023},
  volume =	{261},
  editor =	{Etessami, Kousha and Feige, Uriel and Puppis, Gabriele},
  publisher =	{Schloss Dagstuhl -- Leibniz-Zentrum f{\"u}r Informatik},
  address =	{Dagstuhl, Germany},
  URL =		{https://drops.dagstuhl.de/entities/document/10.4230/LIPIcs.ICALP.2023.13},
  URN =		{urn:nbn:de:0030-drops-180656},
  doi =		{10.4230/LIPIcs.ICALP.2023.13}
}

@InProceedings{cohen,
author="Cohen, Johanne
and Fomin, Fedor
and Heggernes, Pinar
and Kratsch, Dieter
and Kucherov, Gregory",
editor="Kr{\'a}lovi{\v{c}}, Rastislav
and Urzyczyn, Pawe{\l}",
title="Optimal Linear Arrangement of Interval Graphs",
booktitle="Mathematical Foundations of Computer Science 2006",
year="2006",
publisher="Springer Berlin Heidelberg",
address="Berlin, Heidelberg",
pages="267--279",
isbn="978-3-540-37793-1"
}

@article{devosNurse,
  author       = {Matt DeVos and
                  Kathryn Nurse},
  title        = {A Maximum Linear Arrangement Problem on Directed Graphs},
  journal      = {CoRR},
  volume       = {abs/1810.12277},
  year         = {2018},
  url          = {http://arxiv.org/abs/1810.12277},
  eprinttype    = {arXiv},
  eprint       = {1810.12277},
  bibsource    = {dblp computer science bibliography, https://dblp.org}
}

@article{nurse,
  author       = {Kathryn Nurse},
  title        = {Maximum Linear Arrangement of Directed Graphs},
  journal      = {Masters Thesis, Simon Fraser Univeristy},
  year         = {2019},
  url          = {https://summit.sfu.ca/item/19406},
}

@article{hassinRubinstein,
title = {Approximation algorithms for maximum linear arrangement},
journal = {Information Processing Letters},
volume = {80},
number = {4},
pages = {171-177},
year = {2001},
issn = {0020-0190},
doi = {https://doi.org/10.1016/S0020-0190(01)00159-4},
url = {https://www.sciencedirect.com/science/article/pii/S0020019001001594},
author = {Refael Hassin and Shlomi Rubinstein}
}

@article{diaz,
author = {D\'{\i}az, Josep and Petit, Jordi and Serna, Maria},
title = {A survey of graph layout problems},
year = {2002},
issue_date = {September 2002},
publisher = {Association for Computing Machinery},
address = {New York, NY, USA},
volume = {34},
number = {3},
issn = {0360-0300},
url = {https://doi.org/10.1145/568522.568523},
doi = {10.1145/568522.568523},
journal = {ACM Comput. Surv.},
month = {sep},
pages = {313–356},
numpages = {44}
}

@article{even,
title = {NP-completeness of several arrangement problems},
journal = {Technical report TR-43, , Dept. of Computer Science, Technion, Haifa},
year = {1975},
author = {S. Even and Y. Shiloah}
}

@inproceedings{garey,
author = {Garey, M. R. and Johnson, D. S. and Stockmeyer, L.},
title = {Some simplified NP-complete problems},
year = {1974},
isbn = {9781450374231},
publisher = {Association for Computing Machinery},
address = {New York, NY, USA},
url = {https://doi.org/10.1145/800119.803884},
doi = {10.1145/800119.803884},
booktitle = {Proceedings of the Sixth Annual ACM Symposium on Theory of Computing},
pages = {47–63},
numpages = {17},
location = {Seattle, Washington, USA},
series = {STOC '74}
}

@article{petit,
author = {Petit, Jordi},
year = {2011},
month = {01},
pages = {},
title = {Addenda to the survey of layout problems},
volume = {105},
journal = {Bulletin of the European Association for Theoretical Computer Science EATCS}
}

@article{Puig_Math,
title = {The maximum linear arrangement problem for trees under projectivity and planarity},
journal = {Information Processing Letters},
volume = {183},
pages = {106400},
year = {2024},
issn = {0020-0190},
doi = {https://doi.org/10.1016/j.ipl.2023.106400},
url = {https://www.sciencedirect.com/science/article/pii/S0020019023000431},
author = {Lluís Alemany-Puig and Juan Luis Esteban and Ramon Ferrer-i-Cancho}
}

@misc{Puig_Math_new,
      title={Maximum Linear Arrangement: exact algorithms for specific classes of graphs and approximation algorithms for wide classes of graphs}, 
      author={Lluís Alemany-Puig and Juan Luis Esteban and Ramon Ferrer-i-Cancho},
      year={2026},
      eprint={2312.04487},
      archivePrefix={arXiv},
      primaryClass={cs.DS},
      url={https://arxiv.org/abs/2312.04487}, 
}

\end{document}